\documentclass[conference]{IEEEtran}
\usepackage[utf8]{inputenc}
\usepackage{cite}
\usepackage{amsmath,amssymb,amsfonts, amsthm}
\usepackage{dsfont}
\usepackage{mathtools}
\usepackage{mgraphics}
\usepackage{mdefs}
\usepackage{units}
\usepackage[flushleft]{threeparttable}
\usepackage{enumitem}

\usepackage[noend]{algorithmic}
\usepackage[ruled,linesnumbered]{algorithm2e}
\usepackage{scrextend}

\usepackage[subrefformat=parens,labelformat=parens,caption=false,font=footnotesize]{subfig}

\usepackage{tikz}
\usetikzlibrary{shapes,arrows}
\usetikzlibrary{calc}

\newtheorem{proposition}{Proposition} 
\newtheorem{remark}{Remark} 

\def\tablescale{0.85}

\setlength{\textfloatsep}{4pt}

\newcommand\blfootnote[1]{%
  \begingroup
  \renewcommand\thefootnote{}\footnote{#1}%
  \addtocounter{footnote}{-1}%
  \endgroup
}

\title{Learning Hierarchical Resource Allocation and Multi-agent Coordination of 5G mobile IAB Nodes}

\newcommand{\titleheader}{This work has been accepted for publication in 2023 IEEE International Conference on Communications (ICC): SAC Backhaul/Fronthaul Networking and Communications Track}

\makeatletter
\def\ps@headings{%
\def\@oddhead{\mbox{}\scriptsize \titleheader}
\def\@oddfoot{\scriptsize \@date \hfil }%
}

\def\ps@IEEEtitlepagestyle{%
\def\@oddhead{\mbox{}\scriptsize \titleheader \rightmark \hfil }%
}
\makeatother

\author{\IEEEauthorblockN{Mohamed Sana, Benoit Miscopein}
\IEEEauthorblockA{CEA-Leti, Université Grenoble Alpes, F-38000 Grenoble, France\\
Email: \{mohamed.sana, benoit.miscopein\}@cea.fr}}

\begin{document}

\maketitle

\begin{abstract}
    We consider a dynamic millimeter-wave network with integrated access and backhaul, where mobile relay nodes move to auto-reconfigure the wireless backhaul. Specifically, we focus on \emph{in-band relaying} networks, which conduct access and backhaul links on the same frequency band with severe constraints on co-channel interference. In this context, we jointly study the complex problem of dynamic relay node positioning, user association, and backhaul capacity allocation. To address this problem, with limited complexity, we adopt a hierarchical multi-agent reinforcement with a two-level structure. A high-level policy dynamically coordinates mobile relay nodes, defining the backhaul configuration for a low-level policy, which jointly assigns user equipment to each relay and allocates the backhaul capacity accordingly. The resulting solution automatically adapts the access and backhaul network to changes in the number of users, the traffic distribution, and the variations of the channels. Numerical results show the effectiveness of our proposed solution in terms of convergence of the hierarchical learning procedure. It also provides a significant backhaul capacity and network sum-rate increase (up to $3.5\times$) compared to baseline approaches.
\end{abstract}

\blfootnote{This work was supported by the French government under the Recovery Plan (CRIIOT Project) and the H2020 Project DEDICAT 6G (no. 101016499).}

\vspace{-0.3cm}
\section{Introduction}
Enhanced mobile broadband services (eMBB) with high data throughput requirement (up to $20\Gbps$ peak) is one of the main targets of the recently standardized 5G networks \cite{TR38.913}. To boost the network capacity, 5G adopts cell densification together with millimeter wave (mmWave) communications to benefit from the large spectrum available at these frequencies \cite{TR38.913}. In addition, spatial reuse of the spectrum across a geographical area allows cell densification to considerably improve the coverage quality of mmWave base stations (BSs) and the performance of cell-edge users (UEs) \cite{LopezPerez2011HetNets}. However, densification poses serious challenges to radio resource management (RRM), which become complex with increasing number of UEs and BSs. Also, network capacity does not increase systematically with cell densification due to co-channel interference and limited backhaul capacity, which needs to be increased accordingly. Yet, the deployment of backhaul networks, generally relying on wired optical fibers or microwave links, is expensive, setting constraints on the backhaul capacity, which may affect network spectral efficiency and quality of service (QoS) of end-users. 

To address this problem, 5G also introduces integrated access and backhaul (IAB) networks as a cost-effective alternative to wired backhaul networks \cite{TR38.874, Weiler2014}. Indeed, the large spectrum resource available at mmWave frequencies allows partitioning the total bandwidth into parts dedicated to wireless access and backhaul networks, respectively \cite{TR38.874}. In the considered system model, multiple mobile relay stations (hereafter referred to as mIAB nodes) dynamically move to form a wireless backhaul network with an overlaid mmWave station (hereafter referred to as IAB donor), jointly providing access to multiple deployed mobile UEs. In this study, we focus on \emph{in-band relaying} IAB networks, which simultaneously conduct the access and backhaul links on the same frequency band. Such networks have stringent interference constraints as the access and backhaul links mutually interfere with each other. Therefore, dynamic coordination of mIAB nodes together with efficient joint RRM on the access and backhaul network is required. 
This problem has received a wide attention from academia and industry \cite{Domenico2013BackHaul, Michele2018, Emmanouil2018, Dingwen2018}. In \cite{Domenico2013BackHaul}, the authors propose a centralized algorithm, which optimizes the user association taking into account the load of the backhaul network. \cite{Valente2019} proposed a similar approach leveraging a Q-learning algorithm. In \cite{Wanlu2020}, the authors proposed a deep reinforcement learning (RL) algorithm for spectrum allocation, focusing on \emph{out-of-band relaying} IAB networks. Authors in \cite{Bibo2021a} propose a multi-agent RL (MARL) based algorithm to address UE's mobility. However, none of these works consider mobile IAB nodes. In contrast, we address the problem of the dynamic positioning of mIAB nodes to reconfigure the backhaul network together with resource allocation, namely the user association on the access network and backhaul capacity allocation. Moreover, we take into account environment dynamics such as interference, the mobility of UEs, the load of IAB nodes, and the variations of UEs traffic requests with time, which further make it difficult to find accurate and tractable solutions. For instance, \cite{Fouda2018} employs an exhaustive search algorithm, intractable in practice, to optimize mIAB networks with unmanned aerial vehicles. 

To solve this problem with limited complexity, we adopt a \emph{hierarchical} RL (hRL) based approach \cite{Sutton1998}. The hierarchy follows by decomposing the aforementioned problem into two sub-problems. A high-level optimization consists in dynamically coordinating mIAB nodes and defining their positioning to jointly maximize backhaul capacity and UE coverage. Then, a low-level optimization jointly determines the optimal resource allocation on the access and backhaul network. We address these two optimizations within a MARL framework, where we model each mIAB node and each UE as an independent agent relying solely on a few observations of the radio environment to manage radio resources. We introduce new reward functions specifically designed to learn hierarchical policies, with limited complexity. The proposed solution is flexible by design, scalable, and automatically re-configures the backhaul network with respect to (\textit{w.r.t.}) environment dynamics and the requirements on the access network. 

\section{System Model}\label{sec:sys-model}

\begin{figure}[!t]
    \centering
    \includegraphics[width=0.97\columnwidth]{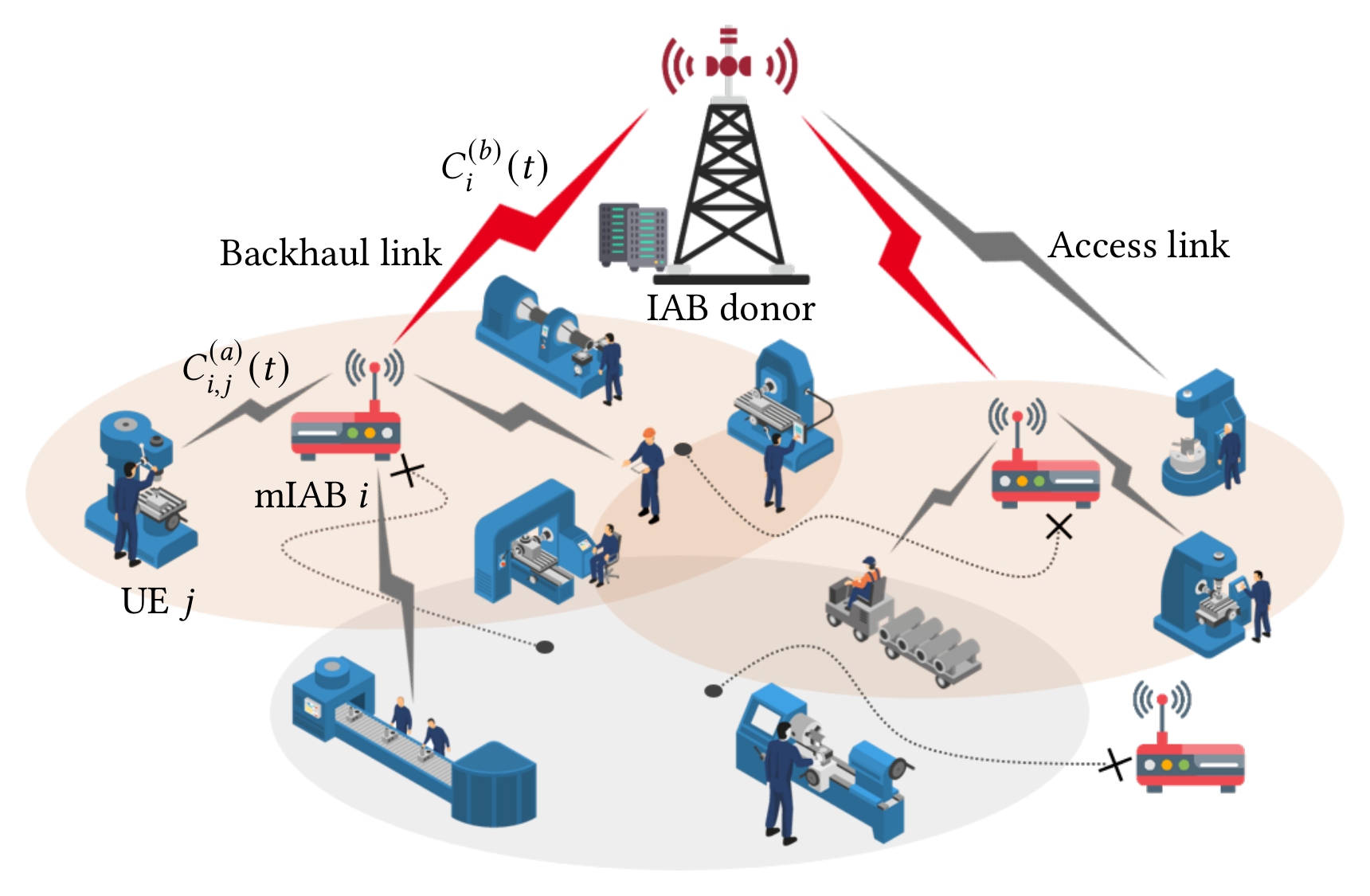}
    \caption{System model with 3 IAB nodes deployed with 1 IAB donor}
    \label{fig:sys-model}
\end{figure}

We consider a downlink mmWave network, as in \fig{fig:sys-model}, composed of one fixed IAB donor forming a backhaul network with $N_s$ mIAB nodes, which move in the network to provide access to a set $\mathcal{U}(t) = \{1,\dots, K(t)\}$ of $K(t)$ UEs at time $t$. We denote with $\mathcal{S}_0 = \{1, \dots, N_s\}$ the set of mIAB nodes and with $\mathcal{S} = \mathcal{S}_0 \cup \{0\}$ the set of all IAB stations, including the IAB donor indexed by $0$. In this dynamic network, each mIAB node $i\in\mathcal{S}_0$, equipped with a receiving antenna can adapt its 2D location $\ell_i(t)$ in a region $\mathcal{L}\subset\mathbb{R}^2$ of space, to set up a wireless backhaul link with the IAB donor. Once positioned, mIAB $i$ can simultaneously serve at most $L_i$ UEs due to limited beamforming capabilities. We use $L_0$ and $M$ to denote the maximum number of downlink access and backhaul links of the IAB donor, respectively. Eventually, given $\rho_i(t)$, refers to as the load of (number of active UEs of) IAB station $i\in\mathcal{S}$ at time $t$, we define $\kappa(t)$ to indicate beam coverage rate (\ien, total beam utilization) of access network over time:
\begin{align}
    \kappa(t) = \frac{\sum_{i\in\mathcal{S}} \rho_i(t)}{\minof{K(t)}{\sum_{i\in\mathcal{S}} L_i}},~
\end{align}

\subsection{Access and backhaul capacity}

As we focus on \emph{in-band} relaying IAB network, the access and the backhaul links partially overlap in frequency. We assume spatial division multiple access (SDMA) for backhaul and access network. In our system model, the IAB donor allocates all the available mmWave bandwidth $B$ to its served mIAB nodes, which in turn, operate in $\mu$-duplex mode \cite{AlAmmouri2016}, conducting all the access links only on a portion $\mu B$ of the band, where $\mu\in[0,1]$.
In this context, interference results from the overlapping of operating beams, as we do not specifically optimize beamformers. Thus, when UE $j$ is receiving data from IAB station $i$, it experiences a downlink signal-to-interference-plus-noise ratio $\mathrm{SINR}_{i,j}^{(a)}$, whose expression reads as:
\begin{align}
    \mathrm{SINR}_{i,j}^{(a)}(t) = \frac{\chi_{i,j}(t) P_{i,j}^{\rm Tx} G_{i,j}^{\rm Tx}(t) G_{i,j}^{\rm H}(t) G_{i,j}^{\rm Rx}(t) }{I_{i,j}^{(a)}(t) + \mu N_0B}.
\end{align}
Here, $P_{i,j}^{\mathrm{Tx}}$ is the transmit power from IAB station $i$ towards UE $j$, $N_0$ is the Gaussian noise power spectrum density, and $G_{i,j}^{\mathrm{Tx}}(t)$ and $G_{i,j}^{\mathrm{Rx}}(t)$ are the transmit and receive antenna gain between IAB station $i$ and UE $j$, respectively. Also, $\chi_{i,j}(t)$ denotes the small-scale fading coefficient, and $G_{i,j}^{\mathrm{H}}(t)$ is the channel gain, which captures the distance-dependent path loss and the large-scale shadowing effect. In particular, note that $G_{i,j}^{\mathrm{H}}(t)$ is affected by the location $\ell_i(t)$ of mIAB node $i$. Eventually, as we assume full spatial reuse of the frequency across all mmWave links, the \emph{total} interference $I_{i,j}^{(a)}(t)$, experienced by UE $j$ communicating with IAB station $i$, results from the contribution of access links intra- and inter-cell interference and the backhaul links inter-cell interference\footnote{{The mutual interference between access and backhaul links sums only over the overlapping bands. The corresponding amplitude factor is $\sqrt{\mu}$, assuming a rectangular pulse shape (in frequency domain) for access and backhaul signal (see \cite{AlAmmouri2016}). Although interesting, the optimization of $\mu$ is out of the scope of the present paper and will be addressed in future work.}}: 
\begin{align}\label{eq:access-interf}
    &I_{i,j}^{(a)}(t) = \chi_{i,j}(t)\sum_{\mathclap{j'\in\mathcal{U}\backslash\{j\}}} x_{i,j'}(t) P_{i,j'}^{\rm Tx} G_{(i,j')\rightarrow j}^{\rm Tx}(t) G_{i,j}^{\rm H}(t) G_{i,j}^{\rm Rx}(t)\nonumber\\
    &+\sum_{\mathclap{i'\in\mathcal{S}\backslash\{i\}}}~~\quad\sum_{\mathclap{j'\in\mathcal{U}\backslash\{j\}}} x_{i',j'}(t)\chi_{i',j}(t) P_{i',j'}^{\rm Tx} G_{(i',j')\rightarrow j}^{\rm Tx} G_{i',j}^{\rm H}(t) G_{(i,j)\leftarrow i'}^{\rm Rx}\nonumber\\
    &+\chi_{0,j}(t){\mu}\sum_{\mathclap{i'\in\mathcal{S}_0}} z_{i'}(t) P_{0,i'}^{\rm Tx} G_{(0,i')\rightarrow j}^{\rm Tx} G_{0,j}^{\rm H}(t) G_{(i,j)\leftarrow 0}^{\rm Rx}(t).
\end{align}
Here, $x_{i,j}(t)$ is the binary user association variable, which equals $1$ if UE $j$ is associated with IAB node $i$, and $0$ otherwise. Similarly, $z_i(t)$ is the backhaul link association variable, which equals $1$ if the backhaul link between IAB node $i$ and the IAB donor is active, and $0$ otherwise; $G_{(u, v)\rightarrow w}^{\rm Tx}$ is the transmit antenna gain from $u$ towards $w$, when $v$ is receiving data from $u$, $G_{(u, v)\leftarrow w}^{\rm Rx}$ is the receive antenna gain from $w$ towards $v$, when $v$ is receiving data from $u$. Hence, the access link's capacity between IAB station $i$ and UE $j$, denoted $C_{i,j}^{(a)}(t)$ is:
\begin{align}
    C_{i,j}^{(a)}(t)={\mu B}\cdot\mathrm{log}_2\left(1 + x_{i,j}(t) \mathrm{SINR}_{i,j}^{(a)}(t)\right).
\end{align}

Similarly, the \emph{total} capacity $C_i^{(b)}(t)$ of the backhaul link between mIAB node $i\in\mathcal{S}_0$ and IAB donor reads as: 
\begin{align}
    C_{i}^{(b)}(t)={B} \cdot\mathrm{log}_2\left(1 + z_{i}(t)\mathrm{SINR}_{i}^{(b)}(t)\right),
\end{align}
where the experienced $\mathrm{SINR}_{i}^{(b)}(t),~\forall i\in\mathcal{S}_0$ is given as follows:
\begin{align}
    \mathrm{SINR}_{i}^{(b)}(t) = \frac{\chi_{0,i}(t) P_{0,i}^{\rm Tx} G_{0,i}^{\rm Tx}(t) G_{0,i}^{\rm H}(t) G_{0,i}^{\rm Rx}(t) }{I_{i}^{(b)}(t) + {N_0B}}.
\end{align}
Here, and in contrast to Eq. \eqref{eq:access-interf}, $I_{i}^{(b)}(t)$ comprises the intra-cell backhaul links interference, the access-links interference and the self-interference\footnotemark[1] resulting from the simultaneous transmission and reception of the corresponding mIAB node:
\begin{align}\label{eq:backhaul-interf}
    &I_{i}^{(b)}(t) =\chi_{0,i}(t)\sum_{\mathclap{i'\in\mathcal{S}\backslash\{i\}}} z_{i'}(t) P_{0,i'}^{\rm Tx} G_{(0,i')\rightarrow i}^{\rm Tx}(t) G_{0,i}^{\rm H}(t) G_{0,i}^{\rm Rx}(t)\nonumber\\
    &+{\mu}\sum_{\mathclap{i'\in\mathcal{S}\backslash\{i\}}} \quad \sum_{\mathclap{j'\in\mathcal{U}}} x_{i',j'}(t) \chi_{i',i}(t) P_{i',j'}^{\rm Tx} G_{(i',j')\rightarrow i}^{\rm Tx} G_{i',i}^{\rm H}(t) G_{(0,i)\leftarrow i'}^{\rm Rx}\nonumber\\
    &+{\mu}\sum_{j'\in\mathcal{U}} z_i(t) x_{i,j'}(t)P_{i,j'}^{\rm Tx} \xi_i,
\end{align}

where, $\xi_i$ is the self-interference cancellation gain of mIAB node $i$, which is the ratio of the received self-interference power after and before the interference cancellation.

\subsection{User effective rate and network sum-rate}
Let $D_j(t)$ denote the traffic request of UE $j$ at time $t$ (in$\bps$). From the backhaul viewpoint, $T_{i,j}(t) = \minof{D_j(t)}{C_{i,j}^{(a)}(t)}$ can be viewed as the effective data requirement on the access link $i\rightarrow j$. We assume that UEs on the same backhaul link $i\in\mathcal{S}_0$ share its capacity $C_i^{(b)}(t)$. Hence, let $\beta_{i,j}(t)\in[0,1]$ denote the fraction of the backhaul capacity dedicated to UE $j$ communicating with mIAB node $i$ at time $t$. Thus, $\forall j\in\mathcal{U}(t)$, the instantaneous \emph{effective rate} $R_{i,j}(t)$ perceived by UE $j$ from IAB station $i$ reads as:
\begin{align}\label{eq:eff-rate}
        R_{i,j}(t) = \begin{dcases}	                                     \minof{T_{i,j}(t)}{\beta_{i,j}(t) z_i(t) C_i^{(b)}(t)},\forall i \in \mathcal{S}_0,\\
				T_{i,j}(t),\hspace{3.95cm}\text{if $i=0$}.
				    \end{dcases}
\end{align}
Eventually, the total network sum-rate $R(t)$ reads as:
\begin{align}\label{eq:objective}
    R(t) = \sum_{i\in\mathcal{S}}\sum_{j\in\mathcal{U}(t)} R_{i,j}(t),
\end{align}

\section{Problem Formulation} \label{sec:problem}
Our goal is to maximize long-term network sum-rate by jointly addressing four sub-problems, which consist in i) finding the optimal locations of mIAB nodes, ii) determining the active backhaul links, iii) associating users \wrt the selected backhaul links, and iv) optimally allocating backhaul capacity to each served UE under long-term and instantaneous constraints:
\begin{align}
	\underset{\mathrmbold{\Psi}(t)}{\mathrm{max}}~~& \lim_{T\rightarrow+\infty}\frac{1}{T}\sum_{t=1}^T R(t), \tag{$\mathcal{P}_0$} \label{eq:P0}\\[0cm]
    	\mathrm{s.t.~}~ & {} \lim_{T\rightarrow+\infty}\frac{1}{T}\sum_{t=1}^T\kappa(t) \geq \kappa_0, \tag{$\mathcal{C}_1$} \label{eq:C1}\\
	    {}&x_{i,j}(t) \in \{0,1\}, & \forall i\in \mathcal{S}, j\in \mathcal{U}(t), \tag{$\mathcal{C}_2$} \label{eq:C2}\\
		{}&\sum_{j \in \mathcal{U}(t)}x_{i,j}(t) \leq L_{i}, & \forall i \in \mathcal{S}, \tag{$\mathcal{C}_3$} \label{eq:C3}\\
		{}&\sum_{i \in \mathcal{S}}x_{i,j}(t) \leq 1, & \forall j \in \mathcal{U}(t), \tag{$\mathcal{C}_4$} \label{eq:C4}\\
        {}& z_i(t) \in \{0,1\}, & \forall i\in\mathcal{S}_0, \tag{$\mathcal{C}_5$} \label{eq:C5}\\
		{}&\sum_{\mathclap{i \in \mathcal{S}_0}}z_i(t)\leq M, & \tag{$\mathcal{C}_6$} \label{eq:C6}\\
    	&\beta_{i,j}(t) \in [0, 1], &  \forall i\in \mathcal{S}_0,j\in \mathcal{U}(t), \tag{$\mathcal{C}_7$} \label{eq:C7}\\
    	{}&\sum_{j \in \mathcal{U}(t)}\beta_{i,j}(t) \leq 1, & \forall i \in \mathcal{S}_0, \tag{$\mathcal{C}_8$} \label{eq:C8}\\
    	{}&\ell_i(t)\in\mathcal{L}\subset\mathbb{R}^2, & \forall i \in \mathcal{S}_0, \tag{$\mathcal{C}_9$} \label{eq:C9}\\
    	{}&\norm{\ell_i(t+1)-\ell_i(t)} \leq \Delta \ell, & \forall i \in \mathcal{S}_0, \tag{$\mathcal{C}_{10}$} \label{eq:C10}
\end{align}
where $\mathrmbold{\Psi}(t) = \{x_{i,j}(t), z_i(t), \beta_{i,j}(t), \ell_i(t), \forall i,j\}$ and the expectation in \eqref{eq:P0} is taken \wrt the random traffic requests and channels realization, whose statistics are unknown. Here, constraint \eqref{eq:C1} guarantees a minimum $\kappa_0$ of long-term beam utilization; \eqref{eq:C2}-\eqref{eq:C4} constrain each IAB station $i$ to serve at most $L_i$ UEs simultaneously and each UE to be associated with only one IAB station at a time. In addition, \eqref{eq:C5}-\eqref{eq:C6} ensure that at most $M$ backhaul links are active simultaneously. Also, \eqref{eq:C7}-\eqref{eq:C8} guarantee that the fractions $\beta_{i,j}(t)$ of the backhaul capacity allocated to UEs on the same backhaul are positive and sum to at most one at each time $t$. Eventually, \eqref{eq:C9}-\eqref{eq:C10} ensure that mIAB nodes move in a region $\mathcal{L}$ of space by no more than $\Delta \ell$ meters at a time. Note that the locations $\ell_i(t)$ of mIAB nodes affect path losses, thus channel gains and $R(t)$.

Problem \eqref{eq:P0} is a combinatorial non-convex optimization, whose complexity grows exponentially \wrt the numbers of UEs, thus intractable with conventional optimization tools \cite{sana2020UA}. In addition, the solution of aforementioned sub-problems mutually affects each other. For instance, the optimal allocation of backhaul capacity  depends on the user association, which in turn depends on the location of mIAB nodes due to the mutual interference between access and backhaul links.

\begin{proposition} \label{prop:pure-association}
Given a deployment of mIAB nodes, if the optimal user association is known, then, 
\begin{enumerate}
\item the optimal association of backhaul links is given by:
\begin{align}\label{eq:active_backahul}
    z_i^*(t) = \mathds{1}\left(\rho_i(t) > 0\right), \quad \forall i\in\mathcal{S}_0,
\end{align}
where $\rho_i(t)=\sum_{j \in \mathcal{U}(t)}x_{i,j}^*(t)$ is the load of mIAB node $i$. Here, $\mathds{1}(\mathrm{cond}(x))$ is the indicator function, which equals $1$ if $\mathrm{cond}(x)$ is satisfied and $0$ otherwise, and

\item the optimal allocation of backhaul capacity is obtained by solving the following convex problem:
\begin{align}
	\underset{\{\beta_{i,j}\}_{i,j}}{\mathrm{max}}~ &\sum_{i\in\mathcal{S}_0^+}\sum_{j\in\mathcal{U}^+} \minof{T_{i,j}(t)}{\beta_{i,j}(t) C_i^{(b)}(t)}, \tag{$\mathcal{P}_1$} \label{eq:BH-BW}\\
	\mathrm{s.t.~}~ & \eqref{eq:C7}~\text{and}~\eqref{eq:C8} \nonumber,
\end{align}
where $\mathcal{S}_0^+$ denotes the set of mIAB nodes with active backhaul links and $\mathcal{U}^+$ is the set of active UEs.
\end{enumerate}
Thus, \eqref{eq:P0} can be reduced to i) a mIAB nodes coordination problem, consisting in determining the optimal positioning of the mIABs at each time and ii) a user association problem, consisting in determining the optimal assignment of UEs and mIABs to maximize long-term network sum-rate.
\end{proposition}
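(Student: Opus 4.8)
The plan is to treat the two claims separately: I fix the optimal user association $x_{i,j}^*(t)$ and the mIAB deployment, reason about the remaining variables $z_i(t)$ and $\beta_{i,j}(t)$ one family at a time, and then collect the consequences into the stated reduction.

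For the first claim I would begin with a monotonicity observation read directly off \eqref{eq:access-interf} and \eqref{eq:backhaul-interf}: each activation variable $z_{i'}(t)$ enters the interference of every \emph{other} link with a nonnegative coefficient, while it scales the useful signal (hence $C_i^{(b)}(t)$) of link $i$ alone. Turning a backhaul link on can therefore only raise the interference elsewhere and only help the UEs carried by that same link. I would then split into two cases. If $\rho_i(t)=0$, then $i$ serves no UE, so $z_i(t)$ appears in no useful-signal term and only in the interference of other links; setting $z_i(t)=0$ weakly increases every other SINR and hence $R(t)$ in \eqref{eq:objective}, giving $z_i^*(t)=0$. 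If $\rho_i(t)>0$, then by \eqref{eq:eff-rate} we have $R_{i,j}(t)=\minof{T_{i,j}(t)}{\beta_{i,j}(t)z_i(t)C_i^{(b)}(t)}$, which collapses to $0$ for every UE attached to $i$ as soon as $z_i(t)=0$; since $x^*$ is an optimal association those UEs must contribute strictly positive served rate (otherwise $x^*$ would be dominated by an association that does not strand them), forcing $z_i^*(t)=1$. The two cases combine into $z_i^*(t)=\mathds{1}(\rho_i(t)>0)$.

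The second claim I would obtain by substitution. With $x^*$ and the closed-form $z^*$ now fixed, the only decision variables still influencing \eqref{eq:objective} are the shares $\beta_{i,j}(t)$. The donor terms ($i=0$) equal $T_{0,j}(t)$ and are independent of $\beta$, while for $i\in\mathcal{S}_0^+$ we have $z_i(t)=1$, so $R_{i,j}(t)=\minof{T_{i,j}(t)}{\beta_{i,j}(t)C_i^{(b)}(t)}$; maximizing the sum-rate over $\beta$ therefore reduces exactly to \eqref{eq:BH-BW} under \eqref{eq:C7}--\eqref{eq:C8}. Convexity follows because each summand is the minimum of a constant and a function linear in $\beta_{i,j}(t)$, hence concave, and the feasible set cut out by \eqref{eq:C7}--\eqref{eq:C8} is a polytope, so we maximize a concave objective over a polytope. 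The final reduction is then immediate: once $x^*$ is given, $z^*$ is explicit and $\beta^*$ solves a convex problem, so the only genuinely free quantities left in \eqref{eq:P0} are the locations $\ell_i(t)$ and the association $x_{i,j}(t)$, i.e.\ a mIAB coordination problem and a user-association problem.

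The main obstacle I anticipate is the case $\rho_i(t)>0$ in the first claim: unlike the $\rho_i(t)=0$ case, activating the backhaul both helps link $i$ and worsens interference elsewhere, so a pure interference-monotonicity argument does not settle it. I would resolve this by leaning on the optimality of $x^*$ (a consistency argument: an optimal association never strands a served UE on an inactive backhaul) rather than on interference bookkeeping. A secondary point to check is feasibility with respect to \eqref{eq:C6}: I would argue that an optimal association cannot load more than $M$ distinct mIAB nodes, since otherwise at least one needed backhaul link could not be activated and its UEs would receive zero rate, contradicting optimality; this guarantees $\sum_{i\in\mathcal{S}_0}\mathds{1}(\rho_i(t)>0)\leq M$, so the proposed $z^*$ is admissible.
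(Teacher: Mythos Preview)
Your proposal is correct and follows essentially the same line as the paper's sketch: both argue that a backhaul link is useless (and, by interference, harmful) when it carries no UE, and conversely that an optimal association would never strand UEs on an inactive link, then observe that the residual $\beta$-problem is concave over a polytope. Your treatment is more detailed---you make the interference monotonicity explicit and verify feasibility with respect to \eqref{eq:C6}---but the underlying decomposition and the convexity argument are the same as the paper's.
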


\begin{proof}[Sketch of proof]
The proof follows by first noting that there is no need to activate a backhaul link if no UE has requested a connection to the corresponding mIAB node. Conversely, if a backhaul link is not active, there is no need to associate UEs to the corresponding mIAB node. The user association is not only optimal from the access network viewpoint but also from the backhaul perspective {(guaranteeing at the same time constraint \eqref{eq:C6})}. Thus, if we know the optimal user association, then the optimal association of backhaul links can be immediately deduced using Eq. \eqref{eq:active_backahul}. In this case, given a deployment of mIAB nodes, \eqref{eq:P0} reduces to \eqref{eq:BH-BW}: an optimization problem over only $\{\beta_{i,j}(t),~\forall i,j\}$. Then, the convexity of \eqref{eq:BH-BW} follows by observing that constraints \eqref{eq:C7} and \eqref{eq:C8} are convex. In addition, $C_i^{(b)}(t)$ and $T_{i,j}(t)$ are constant, positive, and independent of $\beta_{i,j}(t), ~\forall i,j$, so that $\minof{T_{i,j}(t)}{\beta_{i,j}(t) C_i^{(b)}(t)}$ is a concave function \wrt $\beta_{i,j}(t)$, which concludes the proof. 
\end{proof}

\begin{remark}
As we assume full spatial reuse of the frequency across all mmWave links, variables $\beta_{i,j}(t)$ are decoupled in \eqref{eq:BH-BW}, which can be solved distributively at each mIAB node.
\end{remark}

Following Proposition \ref{prop:pure-association}, our goal is now to i) find the optimal positioning of the mIABs at each time and ii) the optimal user association from the access and backhaul point of view. Jointly solving these two sub-problems remains complex because of their combinatorial and non-convexity nature. The optimal solution becomes even more challenging when considering network dynamics, including size-variable topology, shadowing, fading, and UEs mobility. To limit such complexity, we adopt the following hierarchical MARL approach.

\section{Proposed Solution via hierarchical MARL}\label{sec:solution}

\subsection{General hierarchical optimization framework}
To solve Problem \eqref{eq:P0}, we extend the standard reinforcement learning framework to a hierarchical two-level structure, where a high-level policy $\pi^{(h)}$ (backhaul link manager) defines a strategy for positioning mIAB nodes, setting goals for a low-level policy $\pi^{(l)}$ (access link manager), which determines the user association strategy. More specifically, high-level policy coordinates mIAB nodes, dynamically adapting their locations to jointly maximize \emph{user coverage} (by at least one mIAB node) and \emph{sum backhaul capacity} ${C}^{(b)}(t)=\sum_{i\in\mathcal{S}_0} C_i^{(b)}(t)$. Then, low-level policy determines the user association to maximize \emph{network sum-rate} \eqref{eq:objective} while guaranteeing long-term \emph{beam utilization} \eqref{eq:C1}. We cast these two optimization problems into separate MARL frameworks, modelling each mIAB and UE as autonomous agent, respectively learning the high- and low-level policy through their interaction with the radio environment. At each time $t$, the radio environment provides to each mIAB node $i$ and to each UE $j$, a high- and low-level observation $\obs{i}{h}{t}\in\mathbb{R}^h$ and $\obs{j}{l}{t}\in\mathbb{R}^l$, respectively. In our setting, the observation of mIAB node $i$ coincides with its current location; thus, $\obs{i}{h}{t}=\ell_i(t)$. In contrast, we define UE $j$'s observation as $\obs{j}{l}{t}=\{D_j(t), \mathrm{RSS}_i(t), \mathrm{AoA}_i(t), R_{i,j}(t-1)\}_{i\in\mathcal{S}}$, which comprises its traffic request, instantaneous local signal measurements such as received signal strength (RSS), corresponding estimated angle of arrival (AoA), and previously experienced throughput \wrt different IAB stations, similar to \cite{sana2020UA}. In addition to its local observation, we assume that each entity $e$, either mIAB node or UE, can build a local radio map $\boldsymbol{\phi}_e(t)\in\mathbb{R}^{p\times n}$ capturing its relative perception of the surrounding radio environment. Combined with local observations, this map allows for effective learning of hierarchical policies.  

\begin{figure}[!t]
    \centering
    \includegraphics[width=0.97\columnwidth]{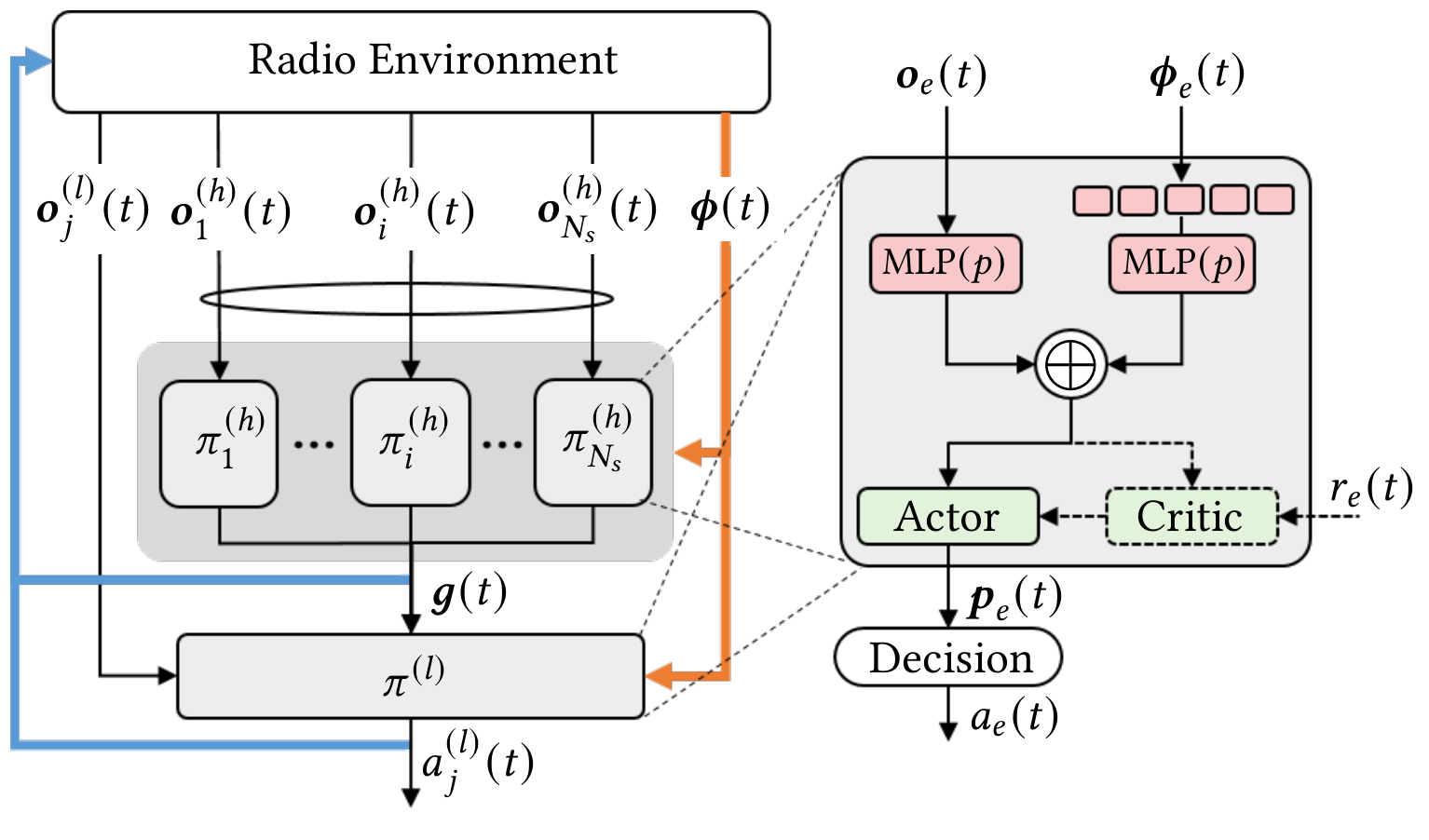}
    \caption{Hierarchical policy network architecture.} 
    \label{fig:policy_archi}
\end{figure}

\subsection{Learning local radio map}
In our framework, each entity learns to generate a local radio map $\boldsymbol{\phi}_e(t)\in\mathbb{R}^{p\times n}$ from locations information $\left\{\ell_{e'}(t), ~e'\in\mathcal{N}_e(t)\right\}$ signaled by neighboring entities $\mathcal{N}_e(t)$ to limit complexity. In this dynamic network where the number of entities and their position change over time, the size of $\mathcal{N}_e(t)$ and the order of received messages vary accordingly. To come out with a method for constructing $\boldsymbol{\phi}_e(t)$, which is size-invariant \wrt $\mathcal{N}_e(t)$ and permutation-invariant \wrt the received messages, we adopt idea from neural attention mechanism \cite{vaswani2017attention}. Specifically, we define $\mathrmbold{k}_{e,  e'}^{(\head)} = \params_{e,k}^{(\head)} (\ell_e(t) - \ell_{e'}(t))^T\in\mathbb{R}^n$ and $\mathrmbold{v}_{e, e'}^{(\head)} = \params_{e,v}^{(\head)} (\ell_e(t) - \ell_{e'}(t))^T\in\mathbb{R}^n$, $\forall \head=1\dots N_{\head}$, which we refer to as the relative key and value of entity $e$ \wrt entity $e'$. Here, $N_{\head}$ is the number of filters (also known as attention heads \cite{vaswani2017attention}); $\params_{e,k}^{(\head)}, \params_{e,v}^{(\head)} \in \mathbb{R}^{n\times 2}$ are learnable parameters. Similarly, let $\mathrmbold{q}_{e}^{(\head)} = \params_{e,q}^{(\head)} \ell_e(t)^T\in\mathbb{R}^n$ refers to as the query of entity $e$, where $\params_{e,q}^{(\head)}$ is also learnable parameter. Let $\mathrmbold{V}_e^{(\head)} = [\mathrmbold{v}_{e, e'}^{(\head)},~\forall e'\in\mathcal{N}_e(t)]$ denote the value matrix. We compute the attention matrix $\mathrmbold{A}_e^{(\head)}$ using dot-product mechanism \cite{vaswani2017attention}: $\mathrmbold{A}_e^{(\head)} = \softmax\left(\left[(\sqrt{n})^{-1}{\mathrmbold{q}_e^{(\head)} {\mathrmbold{k}_{e, e'}^{(\head)}}^T}, ~\forall {e'\in\mathcal{N}_e}\right]\right)$, where $\softmax(\cdot)$ is the normalized exponential function. It represents the interaction between entity $e$ and its neighbors $e'$ at the $\head$-th attention head. Finally, we compute the local radio map of entity $e$ by concatenating the outputs of attention heads, which we obtain via scalar product between $\mathrmbold{A}_e^{(\head)}$ and $\mathrmbold{V}_e^{(\head)}$:
\begin{align}
    \boldsymbol{\phi}_e(t) = \params_{e,\phi} \left[\bigoplus_{\head=1}^{N_{\head}} \mathrmbold{A}_e^{(\head)}\mathrmbold{V}_e^{(\head)} \right].
\end{align}
Here, $\params_{e,\phi}\in\mathbb{R}^{p\times N_{\head}}$ is a learnable parameter and $\oplus$ denotes the row concatenation operator.

\subsection{Learning high-level policy}\label{sec:high-level-policy} 

In our scenario, each mIAB $i$ maintain its own policy $\pi_{i}^{(h)}$. Given $\langle\obs{i}{h}{t}, \boldsymbol{\phi}_i(t)\rangle$, $\pi_{i}^{(h)}$ produces a probability distribution $\boldsymbol{p}_i^{(h)}(t) = \pi_{i}^{(h)}(\obs{i}{h}{t},{\boldsymbol{\phi}_i(t)})$ over the action space $\mathcal{A}^{(h)}$ representing the set of possible directions of movement along $x$- or $y$-axis including immobility, \ie $\card{\mathcal{A}^{(h)}}=5$. From $\boldsymbol{p}_i^{(h)}(t)$, mIAB $i$ samples high-level action $a_i^{(h)}(t)$ and moves along the selected direction with fixed step size $\Delta\ell$ to maximize the expected sum of $\gamma_h$-discounted returns $\sum_{t=0}^{T_h-1}\gamma_h^{t} r_i^{(h)}(t)$ over a time horizon $T_h$. Accordingly, we define the high-level reward to maximize user coverage and \emph{sum} backhaul capacity: 
\begin{align}\label{eq:mIAB-reward}
    r_i^{(h)}(t) = -(1-\delta_i^{(h)}(t)) {d}_i(t) - \delta_i^{(h)}(t)({d}_0 - {C}^{(b)}(t)).
\end{align}
Here, $\delta_i^{(h)}(t) = \mathds{1}(d_i(t) < d_0)$, where $d_0$ is a desirable reference distance and $d_i(t)=\norm{\ell_i(t) - \ell_i^*(t)}$ is the distance between current mIAB node $i$'s location and its optimal position $\ell_i^*(t)$. Since $\ell_i^*(t)$ is not known \emph{a priori}, we approximate it, \emph{during the training process only}, with the location of the closest centroid, linearly assign to mIAB node $i$ after clustering UEs using e.g. \texttt{Kmeans} algorithm. In this way, we push mIAB nodes towards positions, which jointly maximize user coverage (first term of \eqref{eq:mIAB-reward}) and backhaul capacity (second term of \eqref{eq:mIAB-reward}).

\subsection{Learning low-level policy}\label{sec:low-level-policy}
Unlike mIAB nodes, UEs share the same policy $\pi^{(l)}$, thus reducing complexity. Given $\langle\obs{j}{l}{t}, \boldsymbol{\phi}_j(t)\rangle$ and the goal $\boldsymbol{g}(t) = \{\ell_i(t), ~\forall i\in\mathcal{S}_0\}$ defined by the high-level policy, $\pi^{(l)}$ produces a probability vector $\boldsymbol{p}_j^{(l)}(t) = \pi^{(l)}(\obs{j}{l}{t},\boldsymbol{\phi}_j(t)\given{ \boldsymbol{g}(t)})$ over the action space $\mathcal{A}^{(l)}\subset(\mathcal{S}\cup\{\emptyset\})$ as a UE (\eg in an outage) may not be associated with any station. Based on $\boldsymbol{p}_j^{(l)}(t)$, UE $j$ samples its action $a_j^{(l)}(t)$ corresponding to either a decision to stay idle or an association request towards an IAB station to maximize expected sum of $\gamma_l$-discounted returns $\sum_{t=0}^{T_l-1}\gamma_l^{t} r_j^{(l)}(t)$ over a time horizon $T_l$. We define the goal-conditioned reward as:
\begin{align}\label{eq:user-reward}
    r_j^{(l)}(t) = (1-\delta^{(l)}(t))\kappa(t) + \delta^{(l)}(t) (\kappa_0+R(t)).
\end{align}
Similarly to \eqref{eq:mIAB-reward}, we define $\delta^{(l)}(t) = \mathds{1}(\kappa(t) \geq \kappa_0)$ so that maximizing \eqref{eq:user-reward} allows jointly optimizing beam coverage rate $\kappa(t)$ to guarantee constraint \eqref{eq:C1} (first term of \eqref{eq:user-reward}), and maximize network sum-rate $R(t)$ (second term of \eqref{eq:user-reward}).

\begin{remark}
In practice, the high-level decisions are made only every $T_l$ steps, corresponding to the backhaul update frequency, and last $T_h$ steps at the end of which the low-level policy operates to produce low-level actions. Also, for effective learning, we normalize $d_i(t)$ in \eqref{eq:mIAB-reward} \wrt to predefined maximum distance, and ${C}^{(b)}(t)$ in \eqref{eq:mIAB-reward} and $R(t)$ in \eqref{eq:user-reward} by their average values (\wrt environment randomness).
\end{remark}

\subsection{Policy architecture and learning mechanism}\label{sec:policy-architecture}
We adopt the same architecture for high- and low-level policies, briefly described in \fig{fig:policy_archi}. We first encode each entity's local observation and radio map (after being flattened) using a multi-layer perceptron (MLP) of $p$ neurons. Then, we concatenate the resulting encoding vectors, which serve as input for an actor-critic framework that we optimize end-to-end using the well-known proximal policy optimization \cite{schulman2017ppo}. We refer readers to \cite{schulman2017ppo} for a full description.

\section{Numerical results}\label{sec:result}

We randomly deploy $K_0=25$ UEs under the coverage of $N_s=3$ mIAB nodes and one IAB donor. In this dynamic network, the number and position of UEs can change with time. We adopt a random way point mobility model for UEs (velocity $\sim [0,1]~\mathrm{ms}^{-1}$), which is a standard practice \cite{mitsche2014random}. However, our proposed mechanism is, by design, agnostic to UEs mobility model. Also, we model the dynamic of UE's traffic request as a Poisson process with intensity randomly chosen between three types of service, corresponding to an average data rate demand of $5\Mbps$, $200\Mbps$, and $1.5\Gbps$. Table \ref{simu-params} summarizes simulation parameters. 
We empirically define learning parameters $p=n=128, ~N_d=8$, and compose actor and critic network with one MLP of $2p$ neurons. All layers use a rectifier linear unit activation. Also, we set the learning rate of the actor and critic to $10^{-4}$, the discounting factors $\gamma_h=0.95$, $\gamma_l=0.6$, and $T_h=T_l=250$.

\begin{table}[!t]
    \centering
    \caption{Simulations parameters \protect\cite{TR38.874}}
    \label{simu-params}
    \scalebox{\tablescale}{
    \renewcommand{\arraystretch}{1.1}
    \begin{threeparttable}
    \begin{tabular}{l|c}
       \hline
        \textbf{Parameters} & \textbf{Values} \\
       \hline
       \hline
        Carrier frequency & $28\GHz$\\
        System bandwidth, B & {$300\MHz$}\\
        Bandwidth partition, $\mu$ & {$1/3$}\\
        Thermal noise, $N_0$ & $-174\dBm/\mathrm{Hz}$\\
        Small-scale fading ($\sim \text{$m$-Nakagami}$) & $m=3$ \\
        {Large-scale fading} ($\sim\mathcal{N}(0,\sigma_0^2)$) & $\sigma_0^2=9\dB$ (users) ; $\sigma_0^2=3\dB$ (relays) \\
        {Path-loss (relay-users, donor-relays)} & $132.89 + 25\mathrm{log}_{10}(d~[\mathrm{km}])$\\
        {Path-loss (donor-users)} & $154.1 + 25\mathrm{log}_{10}(d~[\mathrm{km}])$\\
        TX power (backhaul / access) & $43\dBm$ / $33\dBm$\\
        Antenna gain (IAB nodes + donor) & {10x10 antenna array} \cite[Fig. 5, diag. 2]{sana2020UA}\\ 
        Antenna gain (users) & {$0\dBi$ (omnidirectional)}\\ 
        Beamforming & $L_i = 10~\forall i\in\mathcal{S}_0; ~L_0 = 7;~ M=3$\\
        Cell radius & $100\m$\\
        Coverage range & $50\m$ (nodes); $75\m$ (donor)\\
        Step size $\Delta\ell$ & $5\m$\\
        Target values & $\kappa_0=0.8$; $d_0=10\m$\\
        Self-interference gain $\xi$ & $-100\dBm$ \\
       \hline
    \end{tabular}
    \end{threeparttable}}
\end{table}

\begin{figure}[!t]
\vspace{-0.4cm}
    \centering
    \includegraphics[width=0.9\columnwidth]{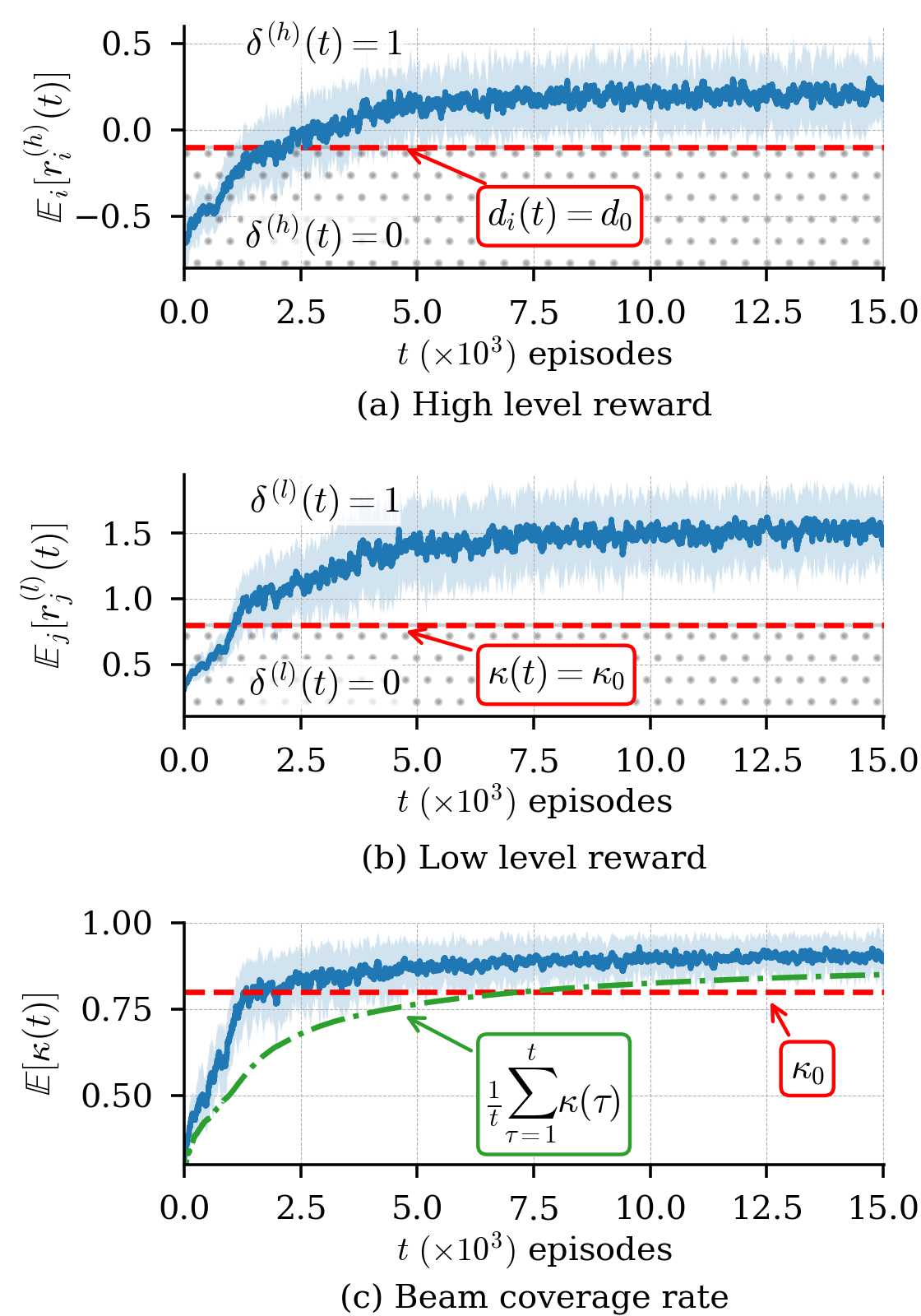}
    \caption{Hierarchical policies learning convergence.}
    \label{fig:reward_conv}
\end{figure}

\vspace{0.25cm}
\noindent
\textbf{Hierarchical Policy Convergence.} %
We first assess the convergence property of our proposed solution. \fig{fig:reward_conv} shows high-level reward, low-level reward, and beam coverage rate, respectively, over episodes of the training process. For sake of clarity, we plot the associated rolling average and standard deviation over a $100$-sized window. On \fig{fig:reward_conv}{a}, the dotted region indicates when $\delta_i^{(h)}(t)$ equals zero. In this region, mIAB agents are optimizing the user coverage (see Eq. \eqref{eq:mIAB-reward}). The reward crossing this region indicates that not only does the high-policy effectively learn to position mIAB nodes to optimize user coverage but also to optimize the backhaul capacity. So is the low-level policy on \fig{fig:reward_conv}{b}, which also learn to jointly optimize the beam coverage and network sum-rate, as the low-level reward also crosses the dotted region where the focus is on maximizing $\kappa(t)$ only (see Eq \eqref{eq:user-reward}). This is further confirmed on \fig{fig:reward_conv}{c}, where we guarantee long-term average constraint \eqref{eq:C1}.

\begin{figure}[!t]
    \centering
    \includegraphics[width=\columnwidth]{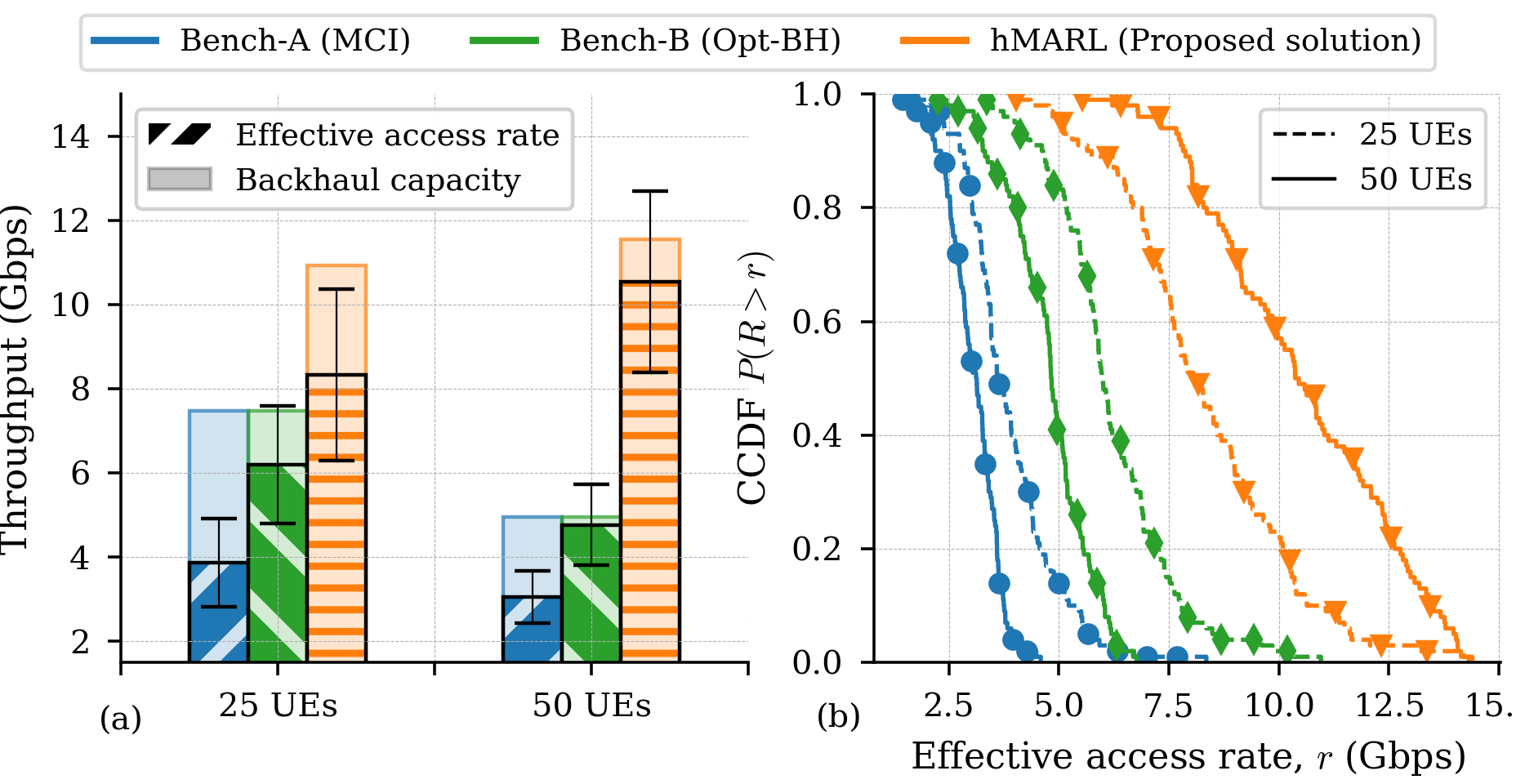}
    \caption{Performance comparison with baseline approaches. We plot the histograms and corresponding cumulative distributed functions (CCDF) over $500$ random Monte-Carlo deployments and channel realizations.}
    \label{fig:hRL_perf}
\end{figure}

\vspace{0.25cm}
\noindent
\textbf{Performance comparison with baseline approaches.} %

\fig{fig:hRL_perf} compares the performance of our proposed solution (referred to as \texttt{hMARL}) to two baselines employing \texttt{Max-SNR} algorithm for managing access links: it assigns each UE to the IAB station providing the maximum signal-to-noise ratio (SNR). The first benchmark (referred to as \texttt{Bench-A}) allocates backhaul capacity using the maximum carrier to interference (\texttt{MCI}) strategy, where the part of the capacity dedicated to a UE is proportional to its spectral efficiency \cite{Domenico2013BackHaul}. The second one (referred to as \texttt{Bench-B}) optimally allocates backhaul capacity by solving our proposed convex problem \eqref{eq:BH-BW}. All these benchmarks adopt a centralized exhaustive \texttt{Kmeans} clustering algorithm to position each mIAB node at the centroid of the closest cluster. In contrast, our proposed solution learns to autonomously and distributively position mIAB nodes to maximize user coverage and backhaul capacity while, at the same time, optimizing user association without the need for any central coordinator. {To show the effectiveness of our solution, we focus on the mIAB's network since UEs associated with the IAB donor do not rely on a backhaul link. Hence, we can observe in the histograms of \fig{fig:hRL_perf}{a} that our proposed solution clearly outperforms the two baseline solutions. It provides a $50\%$ additional backhaul capacity, increasing network sum-rate by $115\%$ and $34.4\%$ compared to \texttt{Bench-A} and \texttt{Bench-B}, respectively. This significant gain is also noticeable in \fig{fig:hRL_perf}{b}, which plots the \emph{service coverage} probability, \ien, the probability of having the sum-rate above a given threshold. When we set this threshold, \egn, to $6.5\Gbps$ for $K=25$ UEs, our proposed solution guarantees $84\%$ service coverage compared to $3\%$ and $40\%$ for \texttt{Bench-A} and \texttt{Bench-B}, respectively. In addition, though we perform the training for $K=25$ UEs, we also evaluate the performance for $K=50$ without any relearning procedure to show the capability of the proposed approach to cope with varying numbers of UEs and network topology. When the number of UEs increases, the performance of the baselines decreases accordingly. Indeed, the network sum-rate does not increase systematically with the increase in the number of UEs due to co-channel interference and limited backhaul capacity. However, in contrast to baseline solutions, our proposed solution exhibits additional performance improvement with up to $3.5\times$ and $2.2\times$ the sum-rate of \texttt{Bench-A} and \texttt{Bench-B}, respectively. This significant gain results from the ability of our algorithm to appropriately serve UEs given their traffic requests and to adapt the backhaul network accordingly.
}

\section{Conclusion}
This work investigated the problem of optimal mobile IAB nodes positioning, user association, and backhaul capacity allocation. We focus on \emph{in-band} relaying IAB network, which conducts the access and backhaul links on the same frequency band. Such network suffers from severe interference constraints limiting mostly backhaul capacity and the performance of end-users. {To address this problem, we propose a novel and scalable two-level hierarchical multi-agent learning mechanism where a high-level policy determines the positioning of mobile IAB nodes by jointly optimizing user coverage and backhaul capacity.} This procedure defines a goal for a low-level policy, which optimizes the user association and allocate the backhaul capacity. Once learned, the policies are distributed and can autonomously reconfigure the backhaul network \wrt environment dynamics and access network requirement. 
Numerical evaluations show the advantage of our solution, which provides up to $3.5\times$ network sum-rate increase compared to baseline approaches. Future work will exploit the results of this study to provide backhaul support to unmanned aerial vehicles network.

\bibliographystyle{ieeetr}
\bibliography{biblio}

\end{document}